\newtheorem{theorem}{Theorem}
\newtheorem{assumption}{Assumption}
\newtheorem{lemma}{Lemma}
\title{Efficient Byzantine-Resilient Stochastic Gradient Descent}
\author{
    Kaiyun Li$^{1,2}$
	\and
	Xiaojun Chen$^{1,}$\textsuperscript{\Letter} \and
	Ye Dong$^{1,2}$
	\and 
	Peng Zhang$^{3}$
	\and 
	Dakui Wang$^{1}$
	\and 
	Shuai Zeng$^{1}$
	\affiliations
	$^1$Institute of Information Engineering, Chinese Academy of Sciencess, Beijing, China.\\
	$^2$ School of Cyber Security, University of Chinese Academy of Sciencess, Beijing, China.\\
	$^3$ Guangzhou University, Guangzhou, China.
	\emails
	\{likaiyun, chenxiaojun,dongye,wangdakui,zengshuai\}@iie.ac.cn\\
	p.zhang@gzhu.edu.cn
}
\begin{document}

\maketitle
\begin{abstract}	
\textit{Distributed Learning} often suffers from  Byzantine failures, and there have been a number of works studying the problem of distributed stochastic optimization under Byzantine failures, where only a portion of workers, instead of all the workers in a distributed learning system, compute stochastic gradients at each iteration. These methods, albeit workable under Byzantine failures, have the shortcomings of either a sub-optimal convergence rate or high computation cost. To this end, we propose a new Byzantine-resilient stochastic gradient descent algorithm (BrSGD for short) which is provably robust against Byzantine failures. BrSGD obtains the optimal statistical performance and efficient computation simultaneously. In particular, BrSGD can achieve an order-optimal statistical error rate for strongly convex loss functions. The computation complexity of BrSGD is $O(md)$, where $d$ is the model dimension and $m$ is the number of machines. Experimental results show that BrSGD can obtain  competitive results compared with non-Byzantine machines in terms of effectiveness and convergence.
\end{abstract}
	
\section{Introduction}
\label{Introduction}
In the fields of recommendation systems, natural language processing, and computer vision, it is often the case that we need to build complex models from large-scale  datasets. As the available data for training models continuously grow, it is urgent to use distributed learning to digest these large-scale data. In a  distributed learning system, robustness and security issues have become a major concern. In particular, individual computing units, i.e., \textit{worker machines}, may exhibit abnormal behavior due to data corruption, hardware/software malfunction, communication delay, etc. This abnormal (even worse adversarial) behavior is typically modeled as \textit{Byzantine failure} \cite{lamport2019byzantine}. It is well-known that a single Byzantine-faulty machine can arbitrarily skew standard distributed learning algorithms based on naive gradients aggregation, e.g., an average of the gradients collected from all the workers~\cite{blanchard2017machine}. Moreover, Byzantine failure is exacerbated in Federated Learning (FL), where training data resides at autonomous worker machines, and a central server facilitates the learning process~\cite{konevcny2015federated,mcmahan2017communication,kairouz2019advances}. The behavior of a worker machine under Byzantine failure is often unpredictable, and even becomes susceptible to malicious and coordinated attacks~\cite{aono2017privacy,zhao2020idlg}.  Therefore, it is increasingly important to develop robust distributed algorithms in the adversarial setting. 

Existing works on  ~\cite{blanchard2017machine,feng2014distributed,chen2017distributed,xie2018generalized,yin2018byzantine} suffer from  the shortcomings of either a sub-optimal statistical guarantees or high computation cost. They are inapplicable to distributed settings where. As a result, we wish to develop distributed learning algorithms that can achieve two objectives simultaneously. The first objective is \textbf{ Byzantine-resilience} which refers to achieving the best performance even though a relatively large fraction of the workers are Byzantine. The second objective is \textbf{high computation efficiency} which refers to preserving as much as possible the run-time speedup by using distributing computation across multiple workers.  

To deal with Byzantine failures, an intuitive solution is to use coordinate-wise median \cite{yin2018byzantine} and its variants~\cite{chen2017distributed,xie2018generalized}. However, these methods independently consider each single exception dimension, which requires heavy computation cost when there are a large number of dimensions. For example, to compute the median value of a single exception dimension, the easiest way is to apply a sorting algorithm to the dimension, which yields a time complexity of $O(dm\log m)$. This time complexity is unaffordable when $d$ is very large.

In this paper, we define a new metric for the gradient vector to evaluate whether it is a Byzantine worker. Our key insight is to directly use the $\ell_1$-norm  constraint to eliminate abnormal gradients, which can greatly simplify the calculation. Specifically, let $\boldsymbol{g}^i$ be a stochastic gradient computed by a worker $i\in [m]$, then $\boldsymbol{g}^i$ shall  concentrate around $\boldsymbol{g}^{med}$, where $\boldsymbol{g}^{med}$ is the median value of $\boldsymbol{g}^i,i\in [m]$. In other words, if $\|\boldsymbol{g}^i - \boldsymbol{g}^{med}\|_1 > 2\mathfrak{T}$, where $\mathfrak{T}>0$ is a hyper-parameter, then we can declare worker $i$ as Byzantine.

\begin{figure}[t]
    \centering
    \includegraphics[width=\linewidth]{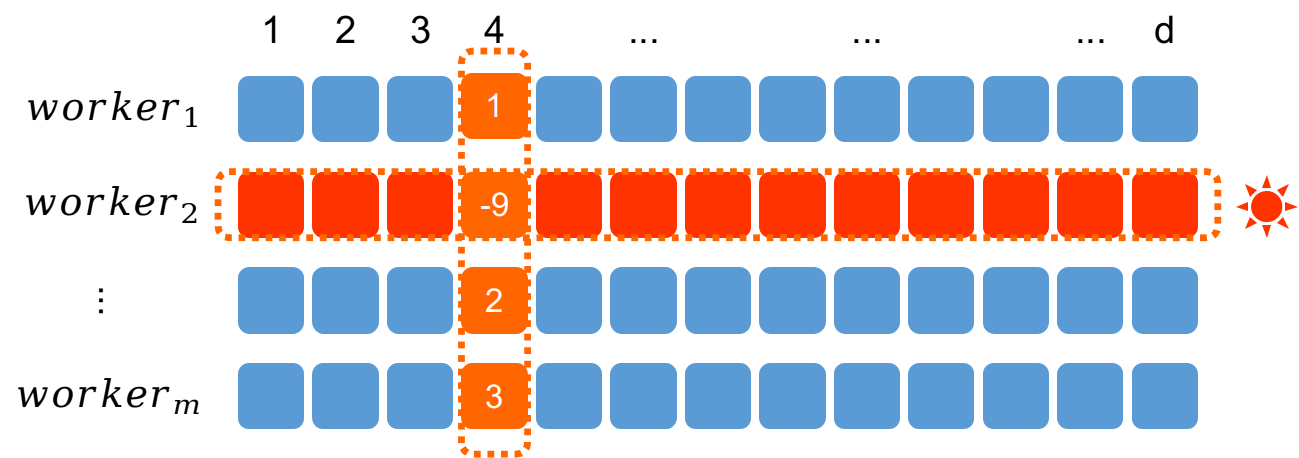}
    \caption{The gradients are collected and combined into a matrix. Taking a column of the matrix, the values uploaded by the normal workers in this column has little difference.}
    \label{idea}
\end{figure}

A challenge to the $\ell_1$-norm method is that some  Byzantine workers may hide without violating the constraint. Then, we further define a score for each gradient. As shown in Fig.\ref{idea},  the master machine  needs to collect the gradients that are flattened into a $d$-dimensional vector from all the workers, and then align these vectors into a new matrix by columns. As shown in Fig.\ref{idea2}, when taking one column of the matrix, we can observe that a subset with a smaller number of elements must deviate heavier from the global mean value than a larger number of elements. Based on the observations that the \textit{honest majority} and the values uploaded by the normal workers are with little difference, a subset with a smaller number of elements is more likely to contain Byzantine workers. Therefore, we can judge the outliers according to each dimension of the gradient and count the number of outliers as a score. We keep the $\beta$-fraction gradients with the highest scores. 

Noted that Byzantine workers may be falsely labeled as good ones. Fortunately, the convergence rates of our method are not impacted significantly. Especially, the subroutine for calculating the score has only one round of comparison and one round of averaging. Thus, it is computationally efficient, almost the same as a naive method implemented through averaging~\cite{polyak1992acceleration}. The  contributions of the paper can be  summarized as follows:
\begin{itemize}
    \item We present a new robust aggregation rule for distributed synchronous Stochastic Gradient Descent algorithm BrSGD in an adversarial setting. The new algorithm can handle Byzantine resilience. Moreover, the computation complexity of the proposed algorithm is $O(md)$.
    \item We theoretically analyze the statistical error rates of our method on strongly convex loss functions. In particular, our algorithm can achieve an order-optimal statistical error rate for strongly convex losses. 
    \item We also demonstrate the convergence of the proposed algorithm by conducting empirical studies. The experimental results on four types of Byzantine attacks match the results of none-Byzantine machines in terms of effectiveness and convergence.
\end{itemize}

The rest of this paper is organized as follows. Related work is summarized in Section \ref{Related_Work}. In Section \ref{Problem_Setup}, we formulate the problem of this work. Section \ref{Algorithms} presents our method and analyzes the statistical error rates. In Section \ref{Experiments}, we present the empirical evaluations, and we conclude this work in Section \ref{conclusion}.

\begin{figure}[t]
    \centering
    \includegraphics[width=\linewidth]{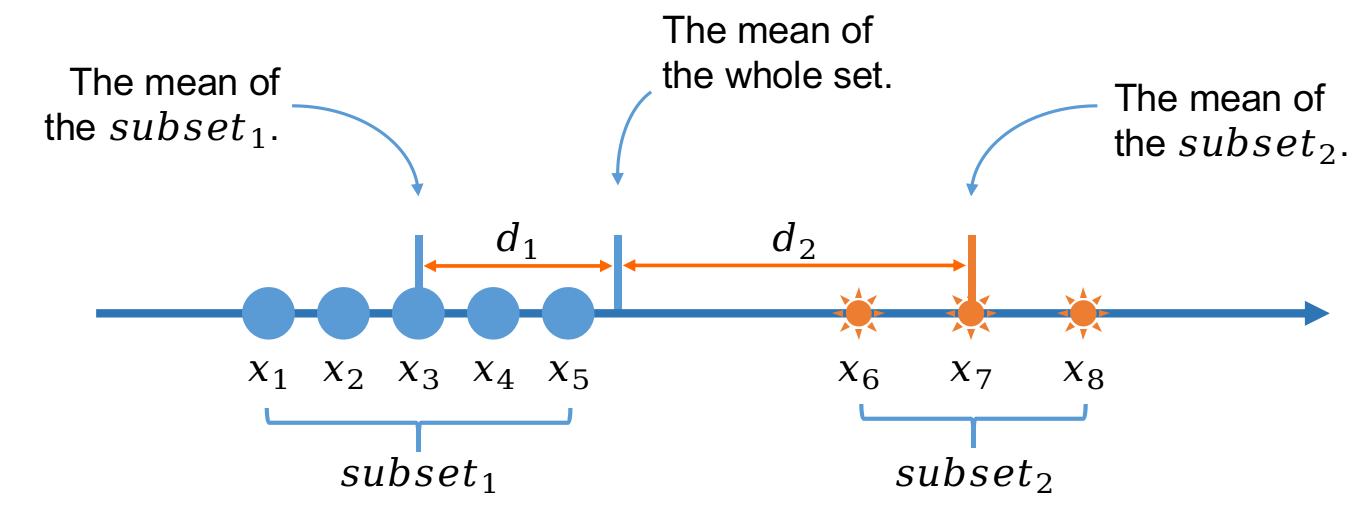}
    \caption{The goal of Byzantine workers ($\mathrm{subset}_2$) is to make the mean value of the set $\{x_1,\dots,x_8\}$ deviating from the \textit{real} mean of the $\mathrm{subset}_1$ as much as possible. However, based on the \textit{honest majority} assumption, the number of elements in $\mathrm{subset}_1$ is greater than $\mathrm{subset}_2$, i.e., $d_1<d_2$. Thus, we can infer whether the Byzantine worker is included according to the number of elements in the two subsets divided by the mean value.}
    \label{idea2}
\end{figure}

\section{Related Work} 
\label{Related_Work}

Recently there have been a large number of works stuyding the problem of  Byzantine stochastic optimization \cite{blanchard2017machine,feng2014distributed,su2016fault,chen2017distributed,alistarh2018byzantine,su2019defending,xie2018generalized,yin2018byzantine,su2018securing,chen2018draco}. Here, we compare existing Byzantine-robust distributed learning algorithms that are most relevant to our work and summarize the comparison in this section.

According to different ideas, existing algorithms can be roughly divided into the following three categories, i.e.,  \textit{Median-based} method,  \textit{Anomaly-detection-based} method, and  \textit{Redundant-gradients-based} method.

The median-based method is used in the context of Byzantine-tolerant distributed learning in papers \cite{blanchard2017machine,feng2014distributed}. The median-based method can be further sub-divided into \textit{Geometric Median}  \cite{chen2017distributed} and \textit{Marginal Median} \cite{xie2018generalized,yin2018byzantine}. For geometric median, the $(1+\epsilon)$-approximate geometric median can be computed in $O(dm\log^3\frac{1}{\epsilon})$ \cite{cohen2016geometric}. An algorithm called the \textit{selection algorithm} \cite{blum1973time} with average time complexity $O(m)$ ($O(m^2)$ is the worst case) to obtain median values. Anomaly detection is performed by running the $T$ iterations \cite{alistarh2018byzantine,allenzhu2021byzantineresilient}. Although computation complexity is considerable for anomaly detection methods, each gradient needs to be checked, and inner product operations are also involved. These operations are very time-consuming in practical.  In these algorithms, the most time-consuming is via redundant gradients methods (ideas from coding theory) \cite{chen2018draco}.

The closest literature is the work of Yin et al. \cite{yin2018byzantine}. They consider a similar Byzantine model, but for \textit{gradient descent} (GD). Reference \cite{blanchard2017machine} proposes a general Byzantine-resilient gradient aggregation rule called \textit{Krum}. This rule has expensive complexity $O(m^2(d+\log m))$ and does not provide a characterization of the statistical errors.  Recent work \cite{chen2017distributed} only applies in the strongly convex setting and is sub-optimal convergence rate. In contrast, our algorithm is efficient ($O(md)$) in computational complexity and achieves order-optimal statistical error rates for strongly convex losses.

\section{Preliminaries}
\label{Problem_Setup}
First of all, we introduce the used notations. A vector is denoted as $\boldsymbol{w}$, and $\boldsymbol{w}_i$ denotes the $i$-th element of $\boldsymbol{w}$. Matrices are denoted as $\boldsymbol{M}$. We let $\boldsymbol{M}_{i,j}$ denote its entry at location $(i,j)$, $\boldsymbol{M}_{i,\cdot}$ denote its $i$-th row, and $\boldsymbol{M}_{\cdot,j}$ denote its $j$-th column. $[N]$ denotes $\{1,2,\dots,N\}$. For any differentiable function $f$, we denote its partial derivative for the $i$-th argument by $\partial_i f$. We index the set of worker machines by $[m]$, and denote the set of Byzantine machines by $\mathcal{B}\in [m]$ (thus $|\mathcal{B}|=\alpha m$).

We consider a distributed computation model with one master machine and $m$ worker machines and that an $\alpha$-fraction of the workers may be Byzantine (where $\alpha < 1/2$). Each worker machine stores $n$ samples, each of which is sampled independently from an unknown distribution $\mathcal{D}$. $\ell_i (\boldsymbol{w};\boldsymbol{x})$ be a loss function of $i$-th worker machine associated with parameter vector $\boldsymbol{w} \in \mathcal{W}\in \mathbb{R}^d$ and the sample $\boldsymbol{x}$, where $\mathcal{W}$ is the parameter space. Denoted the $j$-th data on the $i$-th worker machine by $\boldsymbol{x}^{i,j}$, and $F_i(\boldsymbol{w}):= \frac{1}{n}\sum_{j=1}^n \ell_i(\boldsymbol{w}, \boldsymbol{x}^{i,j})$ denote the empirical risk function for the $i$-th worker. Our goal is to learn a model defined by the parameter that minimizes the population loss:

\begin{equation}
    \boldsymbol{w}^* = \arg \min_{\boldsymbol{w}\in \mathbb{R}^d} F(\boldsymbol{w}),
\end{equation}
where 
\begin{equation}
    F(\boldsymbol{w}):=\mathbb{E}[F_i (\boldsymbol{w})]. 
\end{equation}
The parameter space $\mathcal{W}$ is compact with diameter $D$, i.e., $\|\boldsymbol{w} - \boldsymbol{w}^{'}\|_2 \le D,\forall \boldsymbol{w}, \boldsymbol{w}^{'}\in \mathcal{W}$.
\begin{algorithm}[t]
\SetAlgoLined
\textbf{Input}:initialize parameters $\boldsymbol{w}^0\in \mathcal{W}$, step-size $\eta$, hyperparameters $0 < \beta < \frac{1}{2}$, thresholds $\mathfrak{T}>0$, and iterations $T$\;

\textit{Master machine}: send $\boldsymbol{w}^0$ to all the worker machines\;
    \For{$t = 0,1,\dots,T-1$}{
        \For{all $i\in [m]$}{ 
            \textit{Worker machine $i$}: compute local gradient\;
            \begin{equation*}
                \boldsymbol{g}^i(\boldsymbol{w}^t) = \begin{cases}
                \nabla \ell_i(\boldsymbol{w}^t), \quad & i\in [m]\backslash \mathcal{B},\\
                \ast, & i \in \mathcal{B}.
            \end{cases}
            \end{equation*}
            upload $\boldsymbol{g}^i(\boldsymbol{w}^t)$ to master machine\;
        }
        \textit{Master machine}: compute aggregate gradient
        \begin{equation*}
            \boldsymbol{g}(\boldsymbol{w}^t) \gets \mathcal{A}_{\beta, \mathfrak{T}}(\{\boldsymbol{g}^i(\boldsymbol{w}^t):i\in [m]\}),
        \end{equation*}
        and then send $\boldsymbol{g}(\boldsymbol{w}^t)$ to all worker machines\;
    
        \textit{Worker machine}: update model parameter
        \begin{equation*}
            \boldsymbol{w}^{t+1}\gets \boldsymbol{w}^t - \eta \boldsymbol{g}(\boldsymbol{w}^t)
        \end{equation*}
        }
 \caption{Byzantine-resilient SGD (BrSGD)}
 \label{alg_overview}
\end{algorithm}

\section{Our Method}
\label{Algorithms}
This section presents our Byzantine-Resilient stochastic gradient descent algorithm (named BrSGD) and briefly summarizes our convergence results on its performance.

\subsection{BrSGD Algorithm}

The master machine broadcasts the initialized model, and all the workers receive and accept it as the local initialized model. Then, the master and the workers update the model iteratively. At each iteration, the normal workers compute the gradients of their local loss functions and then upload to the master. The Byzantine workers may send any messages. Next, the master performs a robust aggregator to compute gradients for model updating and send the gradient to all workers. After receiving the gradient from the master, the workers update its parameters in the way of gradient descent and moves into the next iteration until the whole algorithm is completed. The formulation is illustrated in Algorithm \ref{alg_overview}.


In the following context, we focus on the aggregation method construction. Intuitively, the aggregation rule should return a vector $\boldsymbol{g}$ that is not too far from the \textit{real} gradient $\nabla F(\boldsymbol{w})$. More precisely, $\boldsymbol{g}$ should approximate the steepest direction of the loss function being optimized. This is expressed as upper bound $\|\boldsymbol{g}^i - \boldsymbol{g}^{med}\|_1 \le 2\mathfrak{T}$ on worker's gradient vector $\boldsymbol{g}^i$ (\textit{Constraint 1}). If the gradient satisfies Constraint 1, we add it to the candidates $\mathcal{C}_1$. To further eliminate the impacts of potential anomaly values in candidates, we set a score to each $\boldsymbol{g}^i$ and keep the $\beta$-fraction gradients with larger scores (\textit{Constraint 2}). Concretely, the master collect the gradients that are flattened into a $d$-dimensional vector from all workers and then align them into a new matrix $\boldsymbol{G}$ by column. Then, we utilize the observation (cf. Section \ref{Introduction}) to mark the anomaly values of $\boldsymbol{G}$ column-wise. We divide one column of data into two subsets using their mean and set the subset with a large number of elements to 1, otherwise 0. Finally, a scoring matrix $\boldsymbol{M}$ composed of 0 and 1 will be generated, and the result of adding the rows will be used as a score for the corresponding gradient. For each worker $i$, we define the score $\boldsymbol{s}_i = \sum_{j\in [d]}\boldsymbol{M}_{i,j},i \in [m]$. For the details, please see Algorithm \ref{aggregater}.

\subsection{Convergence Analysis}

\begin{algorithm}[t]
\SetAlgoLined
\textbf{Inputs}: gradient set $\{\boldsymbol{g}^i(\boldsymbol{w}^t), i\in [m]\}$, hyperparameters $0 < \alpha < \beta \le \frac{1}{2}$, and thresholds $\mathfrak{T}>0$\;
\textbf{Return}: aggregation gradient $\boldsymbol{g}(\boldsymbol{w}^t)$\;
\vspace{0.3cm}
    
    $\boldsymbol{a} = 0,\boldsymbol{s} = 0, \boldsymbol{n} = 0, \boldsymbol{M} = \boldsymbol{0};$\quad \# Initialize the mean vector $\boldsymbol{a}$, score vector $\boldsymbol{s}$, $\ell_1$-norm vector $\boldsymbol{n}$, and score matrix $\boldsymbol{M}$\; 
    \For{$c = 0,1,\dots,d$}{
        \For{$r = 0,1,\dots,m$}{ 
               $\boldsymbol{a}_c = \boldsymbol{a}_c + \boldsymbol{a}_r$\;
        }
        $\boldsymbol{a}_c = \boldsymbol{a}_c / m;$\quad \# calculate the average value of the arranged gradient matrix by column\;
        }
    \For{$c = 0,1,\dots,d$}{
    $counter = 0$\;
        \For{$r = 0,1,\dots,m$}{ 
        \If{$\boldsymbol{g}_c^r(\boldsymbol{w}^t) \ge   \boldsymbol{a}_c$}{
            $\boldsymbol{M}_{r,c} = 1$\;
            $counter = counter + 1$\;
        }
        \Else{
            $\boldsymbol{M}_{r,c} = 0$\;
        }}
        \If{$counter < m/2$}{
            $\boldsymbol{M}_{\cdot,d} = \lnot \boldsymbol{M}_{\cdot,d};$\quad \# obtain the score matrix based on observation 1\;
        }
        $\boldsymbol{s} = \boldsymbol{s} + \boldsymbol{M}_{\cdot,d}^T;$\quad \# calculate the score for each gradient\;
        }
        $\boldsymbol{g}_{med} = \mathrm{median}\{\boldsymbol{g}^{i}(\boldsymbol{w}^t):i\in [m]\}$\;
        $\mathcal{C}_1= \{i:\|\boldsymbol{g}^i(\boldsymbol{w}^t) - \boldsymbol{g}_{med}\|_1 \le 2\mathfrak{T}\};$\ \# \textit{Constraint 1}\;
        $\mathcal{C}_2 = \{i: \mathrm{Top}_{max}^\beta(\boldsymbol{s}_i),i\in [m]\};$\ \# \textit{Constraint 2}\;       
        $\boldsymbol{g}(\boldsymbol{w}^t) = \textrm{mean}(\boldsymbol{g}^i(\boldsymbol{w}^t):i\in \mathcal{C}_1 \cap \mathcal{C}_2)$\;
 \caption{Aggregator $\mathcal{A}_{\beta, \mathfrak{T}}(\{\boldsymbol{g}^i(\boldsymbol{w}^t):i\in [m]\})$}
 \label{aggregater}
\end{algorithm}

In this part, we provide statistical guarantees on the error rates (defined as the distance between $\boldsymbol{w}^T$ and the optimal solution $\boldsymbol{w}^{*}$) of the algorithms. Throughout we assume that the loss function $F(\boldsymbol{w})$ is   smooth.

\begin{assumption}
\label{ass_1}
In each iteration $t$, each normal worker machine $i\in [m]$ gives back a vector $\nabla \ell_i(\boldsymbol{w}^t)\in \mathbb{R}^d$ satisfying $\|\nabla \ell_i(\boldsymbol{w}^t) - \nabla F(\boldsymbol{w}^t)\|_1 \le \mathcal{V}$.
\end{assumption}

One can instead assume $Pr[\|\nabla \ell_i(\boldsymbol{w}^t) - \nabla F(\boldsymbol{w}^t) \|_1\ge t]\le 2\exp (-t^2/2\mathcal{V}^2)$ and the results of this paper continue to hold up to logarithmic factors. But we assumes $F(\boldsymbol{w}_t)$ is Lipschtz continuous, which implies $\|\nabla \ell_i(\boldsymbol{w}^t) - \nabla F(\boldsymbol{w}^t)\|_1$ is bounded. To present the simplest theory, we assumed it is bound with probability 1. 

Firstly, we introduces Lemma \ref{lemma1}, which is used to bound the output of algorithm \ref{aggregater}.
\begin{lemma}
\label{lemma1}
In our algorithm, suppose that the one dimensional samples on all the normal worker machines are independent and identically distributed (i.i.d). Satisfy $v$-sub-exponential with mean $\mu$. For any $t\ge 0$ and $s\ge 0$, let $|\frac{1}{|\mathcal{C}_1\cap \mathcal{C}_2|}\sum_{i\in [m]\backslash \mathcal{B}}g^i - \mu|\le t$ and $\max_{i\in [m]\backslash \mathcal{B}}|g^i - \mu| \le s$. If we set $\mathfrak{T} = s \le \mathcal{V}$, then we have
\begin{equation}
    |\mathcal{A}_{\beta,\mathfrak{T}}(\{g^i:i\in [m]\}) - \mu| \le t+3\beta s
\end{equation}
\end{lemma}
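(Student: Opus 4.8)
The plan is to treat Lemma~\ref{lemma1} as a purely deterministic statement about the one–dimensional instance of the aggregator in Algorithm~\ref{aggregater}: conditioned on the two displayed events (the retained–honest average is within $t$ of $\mu$, and every honest value is within $s$ of $\mu$), the scalar returned by $\mathcal{A}_{\beta,\mathfrak{T}}$ lies within $t+3\beta s$ of $\mu$. Throughout I write $N:=[m]\backslash\mathcal{B}$ for the honest machines (so $|N|=(1-\alpha)m$), $S:=\mathcal{C}_1\cap\mathcal{C}_2$ for the retained set, and $S_N:=S\cap N$, $S_B:=S\cap\mathcal{B}$ for its honest and Byzantine parts. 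The sub–exponential/i.i.d.\ hypothesis and Assumption~\ref{ass_1} are not used in this argument; they are what will later guarantee that the two conditioning events hold with the required probability.

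First I would show the coordinate median is close to $\mu$. Since $\alpha<\tfrac12$, more than half of the $m$ values $g^i$ are honest, and on the event $\max_{i\in N}|g^i-\mu|\le s$ each honest value lies in $[\mu-s,\mu+s]$; a point strictly above $\mu+s$ has more than half of the $g^i$ strictly below it (and symmetrically below $\mu-s$), so it cannot be the median. Hence $|g_{med}-\mu|\le s$. With $\mathfrak{T}=s$ this yields two consequences via the triangle inequality: (i) for honest $i$, $|g^i-g_{med}|\le|g^i-\mu|+|\mu-g_{med}|\le 2s=2\mathfrak{T}$, so $N\subseteq\mathcal{C}_1$; and (ii) for every $i\in\mathcal{C}_1$ (honest or not), $|g^i-\mu|\le|g^i-g_{med}|+|g_{med}-\mu|\le 2\mathfrak{T}+s=3s$, so every retained worker satisfies $|g^i-\mu|\le 3s$.

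Next I would decompose the returned average. Writing $\bar g_{S_N}$ for the mean of $\{g^i:i\in S_N\}$ and using $|S_N|+|S_B|=|S|$ to cancel the constant terms,
\begin{equation*}
\mathcal{A}_{\beta,\mathfrak{T}}(\{g^i:i\in[m]\})-\mu=\frac{|S_N|}{|S|}\bigl(\bar g_{S_N}-\mu\bigr)+\frac{1}{|S|}\sum_{i\in S_B}\bigl(g^i-\mu\bigr).
\end{equation*}
The first term is controlled by the first conditioning event, which bounds $|\bar g_{S_N}-\mu|$ by $t$, and it carries weight $|S_N|/|S|\le 1$; the second term is at most $(|S_B|/|S|)\cdot 3s$ by consequence~(ii). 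So everything reduces to showing that the Byzantine fraction of the retained set is at most $\beta$, i.e.\ $|S_B|\le\beta|S|$; granting this, the triangle inequality gives $|\mathcal{A}_{\beta,\mathfrak{T}}(\{g^i\})-\mu|\le t+3\beta s$.

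The bound $|S_B|\le\beta|S|$ is the step I expect to be the main obstacle, and it is where Constraint~2 (the column–wise scoring and $\mathrm{Top}_{max}^{\beta}$ selection) does the real work. By consequence~(i) every honest worker that survives Constraint~2 also survives Constraint~1, so $S_N=N\cap\mathcal{C}_2$; meanwhile $|S_B|\le\min(|\mathcal{B}|,|\mathcal{C}_2|)\le\alpha m$. To turn this into $|S_B|\le\beta|S|$ I would invoke the observation from Section~\ref{Introduction}: under honest majority the normal values concentrate on the side of the column mean holding more than $m/2$ points, so honest workers accumulate the top scores and are retained preferentially over Byzantine ones, which forces $|S_N|$ to be large enough (relative to $|S_B|$, and hence relative to $|S|=|S_N|+|S_B|$) that the ratio $|S_B|/|S|$ does not exceed $\beta$. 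Making this counting argument precise — purely from the majority/minority structure of each marked column, without any metric handle on the adversarial values — is the only non–routine part; the rest is the median–stability estimate above and two applications of the triangle inequality.
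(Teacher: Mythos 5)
Your preparatory steps are sound and go beyond what the paper bothers to justify (the median-stability bound $|g_{med}-\mu|\le s$, the inclusion $[m]\backslash\mathcal{B}\subseteq\mathcal{C}_1$, and the fact that every retained value is within $O(s)$ of $\mu$). But the step you yourself flag as the main obstacle --- $|S_B|\le\beta|S|$ --- is a genuine gap, and the route you propose for closing it cannot work. The column-wise scoring of Constraint~2 gives no leverage against an adversary whose Byzantine workers report values indistinguishable from honest ones (e.g.\ equal to the column means): such workers land on the majority side of every column, achieve maximal scores, and are retained with certainty, so no counting argument "from the majority/minority structure of each marked column" can show that honest workers are retained preferentially over Byzantine ones. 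The paper never attempts this. Its accounting is purely cardinal: it writes $\sum_{i\in\mathcal{C}}(g^i-\mu)=\sum_{i\in\mathcal{M}}(g^i-\mu)-\sum_{i\in\mathcal{M}\cap\mathcal{R}}(g^i-\mu)+\sum_{i\in\mathcal{B}\cap\mathcal{C}}(g^i-\mu)$, bounds the retained Byzantine machines by their \emph{total} number $|\mathcal{B}|=\alpha m$ (each deviating by at most $2s$, via Constraint~1), bounds the removed honest machines by $\beta m$ (each deviating by at most $s$), and collects $(\beta+2\alpha)s\le 3\beta s$ using $\alpha\le\beta$. So the constant $3\beta s$ arises as $(\beta+2\alpha)s$, not as $\beta\cdot 3s$; the scoring mechanism enters the error bound only through the count of discarded machines, never through any claim that it filters out Byzantine ones. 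Replacing your final step with this cardinality bookkeeping (and a lower bound on $|\mathcal{C}|$) is what you need.

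A secondary issue: you bound the first term by asserting that the average of the \emph{retained} honest values is within $t$ of $\mu$, but the hypothesis controls $|\frac{1}{|\mathcal{C}_1\cap\mathcal{C}_2|}\sum_{i\in[m]\backslash\mathcal{B}}g^i-\mu|$ --- the sum over \emph{all} honest machines normalized by the size of the retained set --- which is neither the retained-honest average nor the honest average. Converting one into the other costs extra $O(s)$ terms (via the event $\max_i|g^i-\mu|\le s$) together with a lower bound on $|\mathcal{C}_1\cap\mathcal{C}_2|$, neither of which you supply; the paper's own proof is comparably loose at this point, bounding $|\sum_{i\in\mathcal{M}}(g^i-\mu)|$ by $(1-\alpha)mt$ and then dividing by $|\mathcal{C}|$.
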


\begin{proof}
To simplify notation, we define $\mathcal{M}=[m]\backslash\mathcal{B}$ as the set of all normal worker machines, $\mathcal{C} = \{\mathcal{C}_1\cap \mathcal{C}_2\} \subseteq [m]$ as the set of candidates, and $\mathcal{R}\subseteq [m]$ as the set of all removed machines. The estimator simply computes 
\begin{equation}
   \mathcal{A}_{\beta,\mathfrak{T}}(\{g^i:i\in [m]\}) = \frac{1}{|\mathcal{C}|}\sum_{i\in \mathcal{C}} g^i
\end{equation}
Condition 1 and Assumption \ref{ass_1} ensures that the boundary of each dimension of the gradient in the candidate set does not exceed $2s$. Then, We have 
\begin{equation}
\begin{aligned}
    &|\mathcal{A}_{\beta,\mathfrak{T}}(\{g^i:i\in [m]\}) - \mu| = \left| \frac{1}{|\mathcal{C}|}\sum_{i\in \mathcal{C}} g^i - \mu \right|\\
    &=\frac{1}{|\mathcal{C}|}\left|\sum_{i\in \mathcal{M}} (g^i - \mu) - \sum_{i\in \mathcal{M} \cap \mathcal{R}} (g^i - \mu)+ \sum_{i\in \mathcal{B}\cup \mathcal{C}} (g^i - \mu)\right|\\
    &\le \frac{1}{|\mathcal{C}|}\left((1-\alpha)mt + \beta ms + 2\alpha ms \right)\\
    &\le t+3\beta s
\end{aligned}
\end{equation}
where the last inequality is by $\beta \ge \alpha$. We directly obtain the desired result. 
\end{proof}

\begin{assumption}[Smoothness of $\ell$ and $F$]
\label{ass_2}
For any $\boldsymbol{x}\in \mathcal{X}$, the partial derivative of $\ell(\cdot; \boldsymbol{x})$ with respect to the $k$-th coordinate of its first argument, denoted by $\partial_k \ell(\cdot; \boldsymbol{x})$, is $L_k$-Lipschitz continuity for each $k\in [d]$, and the function $\ell(\cdot; \boldsymbol{x})$ is $G$-Lipschitz smoothness. Let $L:=\sqrt{\sum_{k=1}^dL_k^2}$. Also assume that the population loss function $F(\cdot)$ is $G_F$-Lipschitz smoothness. It is easy to see that $G_F\le G\le L$.
\end{assumption}

\subsubsection{Strongly Convex Losses} We consider the case where the population loss function $F(\cdot)$ is strongly convex.

\begin{theorem}
\label{theorem_2}
suppose that Assumption \ref{ass_2} hold, $F(\cdot)$ is $\lambda_F$-strongly convex,  and $\alpha \le \frac{1}{2} - \epsilon$ for some $\epsilon > 0$. In addition, we assume that $\partial_k \ell(\boldsymbol{w};\boldsymbol{x})$ is $v$-sub-exponential for any $k\in [d]$ and $\boldsymbol{w}\in \mathcal{W}$. Choose step-size $\eta=1/G_F$. Then, with probability at least $1 - \frac{4d}{(1+nmLD)^d}$, after $T$ parallel iterations, we have 
\begin{equation*}
    \| \boldsymbol{w}^T - \boldsymbol{w}^* \|_2 \le \left(1 - \frac{\lambda_F}{G_F + \lambda_F} \right)^T \| \boldsymbol{w}^0 - \boldsymbol{w}^* \|_2 + \frac{2}{\lambda_F}\triangle
\end{equation*}
where
\begin{equation*}
\label{triangle}
    \triangle = \widetilde{O}\left(\frac{1}{\sqrt{n}}+\frac{1}{\sqrt{nm}} \right)
\end{equation*}
\end{theorem}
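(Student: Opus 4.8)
The plan is to split the argument into a \emph{statistical} part — the aggregated gradient $\boldsymbol g(\boldsymbol w)$ stays uniformly close to the true gradient $\nabla F(\boldsymbol w)$ on all of $\mathcal W$ with the stated probability — and a \emph{deterministic optimization} part that feeds this into the standard contraction for strongly convex, smooth objectives.

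For the statistical part, fix $\boldsymbol w\in\mathcal W$ and a coordinate $k\in[d]$. The $k$-th entries of the honest workers' gradients are i.i.d.\ averages of $n$ sub-exponential samples centered at $\partial_k F(\boldsymbol w)$, so a Bernstein-type tail bound gives, except on an exponentially small event, both (i) $\max_{i\in[m]\setminus\mathcal B}|g^i_k-\partial_k F(\boldsymbol w)|\le s=\widetilde O(1/\sqrt n)$ and (ii) $\big|\tfrac{1}{(1-\alpha)m}\sum_{i\in[m]\setminus\mathcal B}g^i_k-\partial_k F(\boldsymbol w)\big|\le t=\widetilde O(1/\sqrt{nm})$. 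On this event every honest worker satisfies \textit{Constraint 1} (its gradient is within $2\mathfrak T$ of $\boldsymbol g_{med}$ once $\mathfrak T$ dominates the per-coordinate scale $s$, which is consistent with $\mathfrak T\le\mathcal V$), so $\mathcal C_1\supseteq[m]\setminus\mathcal B$; since $|\mathcal C_2|=\beta m$ contains at most $\alpha m$ Byzantine indices, $\mathcal C_1\cap\mathcal C_2$ retains at least $(\beta-\alpha)m$ honest workers. Lemma~\ref{lemma1}, applied coordinatewise with this $t$ and $s$, then yields $|\boldsymbol g_k(\boldsymbol w)-\partial_k F(\boldsymbol w)|\le t+3\beta s$. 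Summing over $k$ and using the $\ell_1$ normalization of Assumption~\ref{ass_1} to keep $\sum_k s_k$ under control gives $\|\boldsymbol g(\boldsymbol w)-\nabla F(\boldsymbol w)\|_2\le\triangle=\widetilde O(1/\sqrt n+1/\sqrt{nm})$ at the fixed $\boldsymbol w$.

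To upgrade this to hold simultaneously for all $\boldsymbol w\in\mathcal W$ — necessary because $\boldsymbol w^t$ is itself random and data-dependent — I would take an $\varepsilon$-net $\mathcal W_\varepsilon$ of the radius-$D$ set $\mathcal W$, of cardinality at most $(1+D/\varepsilon)^d$, apply the above bound and a union bound over $\mathcal W_\varepsilon$ and over the $d$ coordinates, and transfer to arbitrary $\boldsymbol w$ via the $L_k$-Lipschitz continuity of $\partial_k\ell$ and the $G_F$-smoothness of $F$ (Assumption~\ref{ass_2}), so the discretization error is $O(\varepsilon L)$ per coordinate. Choosing $\varepsilon\asymp 1/(nmL)$ balances net error against sampling error and produces exactly the failure probability $\tfrac{4d}{(1+nmLD)^d}$. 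This covering/union-bound step, plus the bookkeeping that keeps the ambient dimension out of $\triangle$, is the main obstacle; everything else in the statistical part is a routine combination of Lemma~\ref{lemma1} with sub-exponential concentration.

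For the optimization part, work on the event that $\|\boldsymbol g(\boldsymbol w^t)-\nabla F(\boldsymbol w^t)\|_2\le\triangle$ for every $t$. Writing $\boldsymbol w^{t+1}-\boldsymbol w^*=\big(\boldsymbol w^t-\eta\nabla F(\boldsymbol w^t)-\boldsymbol w^*\big)-\eta\big(\boldsymbol g(\boldsymbol w^t)-\nabla F(\boldsymbol w^t)\big)$ with $\eta=1/G_F$, the first term is contracted by the standard inequality for $\lambda_F$-strongly convex, $G_F$-smooth $F$ (combine co-coercivity of $\nabla F$ with strong convexity; at $\eta=1/G_F$ the $\|\nabla F\|_2^2$ coefficient is non-positive), which after bounding the resulting square root gives $\|\boldsymbol w^t-\eta\nabla F(\boldsymbol w^t)-\boldsymbol w^*\|_2\le\big(1-\tfrac{\lambda_F}{G_F+\lambda_F}\big)\|\boldsymbol w^t-\boldsymbol w^*\|_2$, while the second term contributes at most $\eta\triangle=\triangle/G_F$. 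Unrolling the recursion and summing the geometric series yields the coefficient $\tfrac1{G_F}\cdot\tfrac{G_F+\lambda_F}{\lambda_F}=\tfrac1{\lambda_F}+\tfrac1{G_F}\le\tfrac{2}{\lambda_F}$ on $\triangle$ (using $\lambda_F\le G_F$), which is exactly the claimed bound.
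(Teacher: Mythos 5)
Your proposal is correct and follows essentially the same route as the paper, whose own ``proof'' merely defers to Yin et al.\ (2018) and records the choices of $t$ and $s$; your plan (coordinate-wise sub-exponential concentration giving $s=\widetilde O(1/\sqrt n)$ and $t=\widetilde O(1/\sqrt{nm})$, Lemma~\ref{lemma1} applied per coordinate, an $\varepsilon$-net of $\mathcal W$ with $\varepsilon\asymp 1/(nmL)$ producing the $(1+nmLD)^d$ covering number, and the standard co-coercivity contraction with $\eta=1/G_F$ unrolled into $\frac{2}{\lambda_F}\triangle$) is precisely the skeleton the paper omits. The one caveat you already flag — keeping the $\sqrt d$ factors out of $\triangle$ — is a looseness inherited from the paper itself, not a defect of your argument relative to it.
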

Note that this paper \cite{yin2018byzantine} provide a lower bound showing that the error rate of any algorithm is $\widetilde{\Omega}(\frac{\alpha}{\sqrt{n}}+\frac{1}{\sqrt{nm}})$. Therefore, our algorithm achieved order-optimal statistical error rate.  
\begin{proof}
The proof is essentially the same as \cite{yin2018byzantine}. Therefore, we omit the details of the analysis here. Choosing 
\begin{small}
\begin{equation*}
t = v\max \left\{ \frac{8d}{nm}\log(1+nmLD)
 , \sqrt{\frac{8d}{nm}\log(1+nmLD)}  \right\}, 
 \end{equation*}
 \end{small}
 and 
\begin{equation*}
\begin{aligned}
s &= v\max \left\{ \frac{4}{n}(d\log(1+nmLD)+\log m)\right.\\
        &\qquad \qquad \qquad \left. , \sqrt{\frac{4}{n}(d\log(1+nmLD)\log m)} \right\}. 
    \end{aligned}
\end{equation*}
By using the simple algebra, we have
\begin{equation*}
    \triangle = \widetilde{O}\left(\frac{1}{\sqrt{n}}+\frac{1}{\sqrt{nm}} \right)
\end{equation*}
\end{proof}

\section{Experiments}
\label{Experiments}
\begin{table*}
  \centering
  \begin{tabular}{c|llll|lll|lll|lll}
    \toprule
    Algorithms & \multicolumn{4}{c}{Ours}  &    \multicolumn{3}{c}{Median}     &    \multicolumn{3}{c}{Mean}     &    \multicolumn{3}{c}{Krum}   \\
    \cmidrule(r){1-14}

    $\alpha$            & 0     & $10\%$  & $25\%$ & $50\%$ & $10\%$  & $25\%$ & $50\%$ & $10\%$  & $25\%$ & $50\%$ & $10\%$  & $25\%$ & $50\%$\\
    \midrule
    Gaussian            & 84.16     & 83.42 & 84.56 & 83.29  & 83.73 & 85.40 & 86.78 & 9.93 & 9.93 & 9.93    & 82.04 & 79.56 & 79.62\\
    Model Negation      & 85.87     & 83.89 & 84.07 & 82.13  & 83.29 & 84.47 & 82.84 & 10.84 & 10.84 & 10.84 & 80.44 & 81.18 & 79.36\\
    Gradient Scale      & 85.29     & 84.33 & 83.67 & 82.87  & 81.00 & 61.13 & 9.80  & 85.87 & 82.47 & 60.67 & 77.20 & 81.33 & 70.76\\
    Label Inverse       & 85.11     & 82.93 & 82.49 & 83.93  & 84.91 & 86.00 & 82.09 & 72.71 & 74.93 & 80.11 & 81.82 & 78.56 & 68.76\\

    \bottomrule
  \end{tabular}
  \caption{Test results of LeNet on FashionMNIST in terms of accuracy using stochastic gradient descent.}
  \label{tableacc}
\end{table*}

We implemented BrSGD in Python3. Our experiments are executed on Intel$\circledR$ Xeon$\circledR$ CPU E5-2650 v3@ 2.30GHz servers with 64GB RAM. We simulate $m=20$ clients. The model is learned atop Pytorch v1.6.0 equipped with CUDA v10.2 and one 12G memory TITAN Xp GPU. the C-S connection is over WAN with 50Mbps bandwidth and 50ms RTT.

\subsection{Experimental settings}

We conduct experiments to show the Byzantine resilience in the following four classic adversarial settings: 
\begin{itemize}
    \item We consider the attackers that replace some of the gradient vectors with  Gaussian random vectors with zero mean and isotropic covariance matrix with standard deviation 200. We refer to this kind of attack as \textit{Gaussian Attack}. 
    \item For each Byzantine gradient vector, the gradient is replaced by the negative sum of all the correct gradients, scaled by a large constant (1e10 in the experiments). We call this attack \textit{Model Negation Attack}. 
    \item We generate the Byzantine machines in the following way: we replace every training label $y$ on these machines with $9-y$, e.g., 1 is replaced with 8, 7 is replaced with 2, etc., and the Byzantine machines compute gradients on these data. Denote this kind of attack by \textit{Label Shift}. 
    \item Named after \textit{Gradient Scale Attack}, an attacker replaces some of the gradient vectors with scaled by a constant (1e10 in the experiments).
\end{itemize}

We instantiate $m=20$ workers and one master node in experiments, and train a convolutional neural network model (LeNet \cite{lecun1998gradient}) on the FashionMNIST dataset \cite{Xiao2017fashion} using mini-batch stochastic gradient descent to compare the test accuracies in those above four adversarial settings. We compare accuracy and convergence with \textit{Mean} (find the average of all gradients), \textit{Krum}\cite{blanchard2017machine}, and coordinate-wise median\cite{yin2018byzantine}. 
\subsection{Experimental results}
\begin{figure*}[t]
  \centering
  \includegraphics[width=0.9\linewidth]{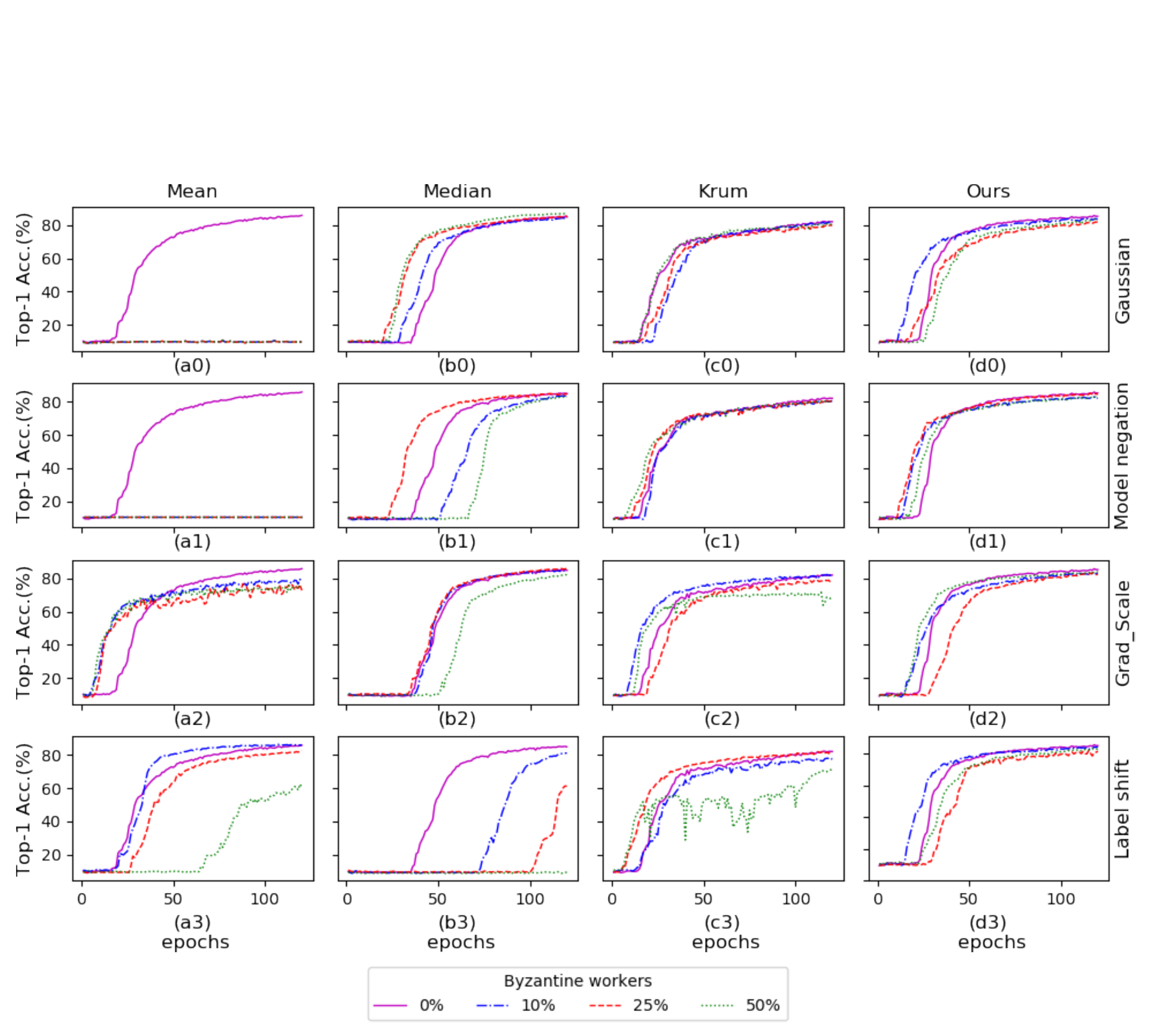}
  \caption{The results \textit{w.r.t.} accuracy vary with the epochs under Byzantine attacks, where we set step-size $\eta=0.03$ for the LeNet on FashionMNIST. Curves correspond to losses, and columns correspond to different aggregation methods. Byzantine workers perform four attacks with 10\%, 25\%, and 50\% attackers, respectively. We set $\beta=1/2$ for our algorithm.}
  \label{pic_acc}
\end{figure*}

As is shown in Figure \ref{pic_acc},  we observe that even with a single attacker performing a \textit{Gaussian Attack} or \textit{Model Negation Attack}, using the \textit{Mean} method directly is devastating (a0,a1). Although the median-based method has high accuracy, the convergence is very slow (b1,b3). \textit{Krum} and our method have good accuracy, but the effect of \textit{Krum} will drop precipitously when the number of Byzantine worker machines is large ($\alpha = 50\%$,c3). In contrast, we can see that our algorithm has the same convergence as the average method without an attack ($\alpha=0$). When the $\alpha=50\%$, the model still achieves high accuracy. We also include the full test accuracy comparison table in Table \ref{tableacc}.
  
In conclusion, the distributed gradient descent algorithm suffers from severe performance loss in adversarial settings. Moreover, show our algorithm can indeed defend against Byzantine failures. Compared with the median-based and \textit{Krum} methods, our algorithm has excellent accuracy and faster convergence for the four attacks.

\section{Conclusion}  
\label{conclusion}
In this paper, we study a new distributed stochastic optimization algorithm in an adversarial setting with the purpose of obtaining the optimal statistical results and computation efficiency simultaneously. Based on the honest-majority assumption, we propose a new stochastic gradient descent algorithm BrSGD. We show that the method can achieve an order-optimal $\widetilde{O}\left(\frac{1}{\sqrt{n}}+\frac{1}{\sqrt{nm}} \right)$ for strongly convex losses, and the computation complexity of our algorithm is $O(md)$. Moreover, we conduct extensive experiments to show that our method outperforms the state-of-the-art methods in terms of effectiveness and convergence.

\bibliographystyle{named}
\bibliography{ijcai21}

\end{document}